%

%
%

%

\documentclass[10pt, journal]{IEEEtran}

\ifCLASSINFOpdf
\else
\fi
\usepackage{multicol}
\usepackage{lipsum}
\usepackage{setspace}
\usepackage{amsmath}
\usepackage{cases}
\usepackage{cite}
\usepackage{amsfonts}
\usepackage{amssymb}
\usepackage[linesnumbered,ruled,vlined]{algorithm2e}

\usepackage{array}
\usepackage[margin=1.0in]{geometry}
\usepackage{amsthm}
\usepackage{empheq}
\renewcommand*{\stackrel}{%
\mathrel\bgroup\stack@relbin
}

\newtheorem{theorem}{Theorem}[section]
\newtheorem{lemma}[theorem]{Lemma}

\bibliographystyle{ieeetr}

%

\ifCLASSINFOpdf
   \usepackage[pdftex]{graphicx}
  \graphicspath{{../pdf/}{../jpeg/}{../eps/}}
  \DeclareGraphicsExtensions{.eps,.pdf,.jpeg,.png}
\else
   \usepackage[dvips]{graphicx}
   \graphicspath{{../eps/}}
   \DeclareGraphicsExtensions{.eps}
\fi
\usepackage{epsfig}
\usepackage{fmtcount}
\usepackage{subfigure}
\usepackage{array}
\usepackage{url}
\usepackage[linesnumbered,ruled,vlined]{algorithm2e}
\usepackage{multicol}
\usepackage{float}
\usepackage{lipsum}
\pagestyle{empty}
\usepackage{color}

\usepackage{mathtools}  
\usepackage{tabulary}
\usepackage{booktabs}
\usepackage{stfloats}


\linespread{0.93}


\hyphenation{op-tical net-works semi-conduc-tor}

\begin{document}

%

\title{{\fontsize{18}{18}\selectfont Distributed Power Allocations in Heterogeneous Networks with Dual Connectivity using Backhaul State Information}}

%
%
%

\author{Syed~Amaar~Ahmad and Dinesh~Datla, \emph{Member, IEEE}
\thanks{Syed Amaar Ahmad and Dinesh Datla were affiliated with the Department of Electrical and Computer Engineering, Virginia Tech. Syed A. Ahmad is currently working with Datawiz Corporation and D. Datla is with Harris Corporation, Lynchburg, VA. Email: \{saahmad,ddatla\}@vt.edu.}}

%
%

\markboth{}%
{Shell \MakeLowercase{\textit{et al.}}: Bare Demo of IEEEtran.cls for Journals}

\maketitle

\begin{abstract}
LTE release 12 proposes the use of dual connectivity in heterogeneous cellular networks, where a user equipment (UE) maintains parallel connections to a macrocell base station and to a low-tier node such as a picocell base station or relay. In this paper, we propose a distributed multi-objective power control scheme where each UE independently adapts its transmit power on its dual connections, where the connections could possess unequal bandwidths and non-ideal backhaul links. In the proposed scheme, the UEs can dynamically switch their objectives between data rate maximization and transmit power minimization as the backhaul load varies. To address the coupling between interference and the backhaul load, we propose a low-overhead convergence mechanism which does not require explicit coordination between UEs and also derive a closed-form expression of the transmit power levels at equilibrium. Simulation results show that our scheme performs with higher aggregate end-to-end data rate and significant power saving in comparison to a scheme that employs a greedy algorithm and a scheme that employs only waterfilling.
\end{abstract}

\begin{IEEEkeywords}
Waterfilling, Energy-efficiency, HetNets, Small cells, Gaussian interference channel, Cross-layer design
\end{IEEEkeywords}

\IEEEpeerreviewmaketitle

\section{Introduction}
\IEEEPARstart{L}{TE} envisions the use of multi-tier access points in a cellular network to increase the coverage region of a base station \cite{Amitava2012}. A heterogeneous cellular network, consisting of a macrocell overlaid with small cells (e.g. picocells and relays), provides an efficient way for a cellular system to support the growing data rate demand. Smalls cells can alleviate the burden on a macrocell by offloading its users and the associated load \cite{Elsawy_2013}. Moreover, handheld devices are often equipped with multi-channel radio transceivers so as to enable multi-layer and multi-band connectivity (e.g. LTE, WiFi) \cite{Hu_wifi_lte_2012}. 

\begin{figure}[!t]
\begin{center}
        \centering\includegraphics[width=0.5\textwidth]{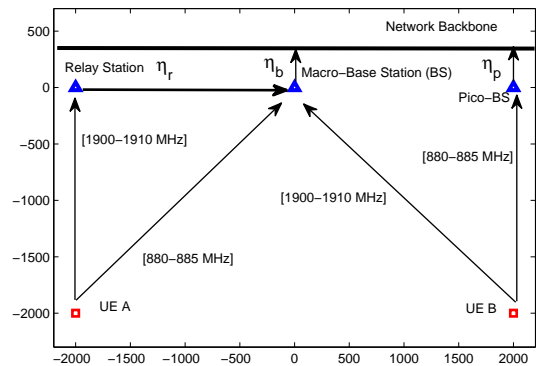}
\caption{A depiction of a heterogeneous network where UEs have dual connectivity over orthogonal channels with unequal bandwidth.}
\label{fig:furt1}
\end{center}
\end{figure}
LTE release 12 introduces enhancements to radio resource management in heterogeneous networks. One such enhancement, namely \emph{dual connectivity}, has been introduced to combat non-ideal backhaul links of base stations and small cells~\cite{3gppDCP12, LeanCarrier2013, Astely2013, Eric2013}. A large number of users can impose an excessive load on the backhaul links, notwithstanding the large backhaul capacity supported by the LTE architecture \cite{cisco2014backhaul, Zaki2013, abuali2013}. Under dual connectivity, a UE can utilize radio resources made available by more than two access points with limited backhaul capacity.   

Dual connectivity can imply a variety of configurations such as enabling downlink and uplink connections on different tiers. As shown in Fig.~\ref{fig:furt1}, we consider dual connectivity where UEs have simultaneous uplink connections to two different access points~\cite{Eric2013}, each with limited (non-ideal) backhaul capacity. In this scenario, the focus of this paper is on distributed transmit power allocation on the two connections by UEs, where each UE makes its power allocation decisions independently of others. {This is in contrast to conventional link layer power allocations which are either agnostic to the state of the backhaul or aim to optimize some cross-layer objective function where nodes connect to a single access point only. In \cite{bambos2000}, nodes with point-to-point links adapt transmit power to achieve some minimum target SINR using the Foschini-Miljanic (FM) power control algorithm~\cite{foschini_93}. In \cite{rasti3} and \cite{rasti2}, the FM algorithm is combined with opportunistic power control that is proposed in~\cite{sung_05} to also be able to opportunitically improve data rates. The common factor in distributed algorithms, such as \cite{foschini_93, bambos2000, rasti3,rasti2,xiao_03}, is that nodes make autonomous decisions to maximize their individual local objective. In contrast, in cross-layer optimization problems, such as those presented in \cite{chiang05, Tran2012_JCPC, Tran2013_JCPC}, nodes adapt to maximize some centralized network-wide utility function through joint power and congestion control. Each node distributedly allocates power so that the total traffic load on any link does not exceed the available capacity. While distributed in nature these algorithms depend on message-passing between transmitters using a flooding protocol.}  

In our proposed adaptive scheme, UEs with dual connections autonomously allocate transmit power based on two factors, namely the current load in the backhaul links and channel link quality indicators, using feedback with a much lower overhead than the feedback used in \cite{chiang05, Tran2012_JCPC, Tran2013_JCPC}. Our scheme does so with the objective of achieving higher data rates and significantly reduced power consumption. A UE performs conventional waterfilling to maximize the data rate on its access links when its backhaul links have high capacity. Unlike \cite{waterfilling_Shum07}, {which studies waterfilling in Gaussian interference channels}, our approach also takes into consideration the impact of limited backhaul capacity. If the backhaul links become overloaded, the UE switches to a transmit power minimization mode and reduces its power (and data rate) on either or both its access links until the backhaul links are load-balanced. This energy-efficient strategy not only improves the UEs' battery life, it also helps alleviate congestion in the backhaul and decreases co-channel interference. Simulation results show that our scheme reduces the average transmit power of a UE to 40$\%$ as compared to a waterfilling scheme for the same data rate when the backhaul links are over-loaded. Given that UEs adapt autonomously, it is challenging to achieve stable system performance due to the coupling between interference and load on the backhaul links. We propose a mechanism by which individual power allocations by the UEs converge to an equilibrium point despite a lack of explicit coordination. We also derive closed-form expression for the converged transmit power levels.

We present the system model in Section II followed by the problem formulation in Section III. In Section IV, we propose the backhaul state-based adaptation scheme. In Section V, we explore convergence properties of the scheme followed by simulation results in Section VI. Finally, Section VII presents the conclusions of the paper.

\section{System Model}
We consider a cellular network that comprises a set of UEs, denoted by ${\mathcal{N}}=\{1,2,\cdots, n\}$, {which are located in a region served by a single macrocell base station (MBS)}. There are additional Points of Access (PoAs) that include $N_p$ picocell base stations (PBS) and $N_r$ relays (RS). A PBS could be treated as proxy for a Wi-Fi or femtocell access point. In the uplink, PBSs and RSs receive data from a UE and forward it to the MBS using a decode-and-forward scheme. The PoAs are denoted by a set, denoted by ${\mathcal{R}}=\{1,2,\cdots,N_r,\cdots, N_r+N_p, N_r+N_p+1\}$, where an RS $r$, a PBS $p$  and MBS $b$ are such that $r\leq N_r$, $N_r+1 \leq p \leq N_r+N_p$ and $b=N_r+N_p+1$. All PoAs share a common network backbone. 
The PoA of UE $i$ on its first access link is denoted as $a^{(1)}_i \in {\mathcal{R}}$ and the PoA of its second access link is $a^{(2)}_i \in {\mathcal{R}}$. 

All UEs can adapt their transmit power levels and, consequently, their data rate. The maximum available transmit power of the UEs is denoted by ${\bf{P}}_{\max}=\left[P_{\max,1},\cdots, P_{\max,n}\right]$ watts. The transmit powers of nodes on their respective first links are represented by the vector ${\bf{P}}_1=\left[P^{(1)}_{1}, \cdots, P^{(1)}_{n}\right]$ and those on the second links are represented by ${\bf{P}}_2=\left[P^{(2)}_{1}, \cdots, P^{(2)}_{n}\right]$, where $P^{(1)}_{i}+P^{(2)}_{i}\leq P_{\max,i}, \forall i \in {\mathcal{N}}$. 

The access links, denoted by two $n-$dimensional vectors ${\bf{L}}_1$ and ${\bf{L}}_2$, operate over a set of channels, denoted by ${\mathcal{F}}=\{1,2,\cdots, F\}$, whose respective bandwidths are represented by ${\mathcal{W}}=[W_1, W_2, \cdots, W_F]$. Channels may be re-used with the following restrictions. The access links of UE $i$ operate on orthogonal channels, i.e. $f^{(1)}_i \neq f^{(2)}_{i}$, and no two UEs transmit to a PoA using the same channel. 

The complex-valued channel gain between UE $i$'s transmitter and the PoA receiver of UE $j$ on channel $f^{(x)}_j\in {\mathcal{F}}$ is represented as $h^{(x)}_{ij}$, the corresponding channel power gain is given by $g^{(x)}_{ij}=|h^{(x)}_{ij}|^2$ and $n^{(x)}_i$ is the noise power, where $x\in \{1,2\}$ denotes either of the access links of UE $i$. Accordingly, the effective interference, SINR and achievable rate of link $x$ for UE $i$, are respectively given by~\cite{sung_05}
\begin{eqnarray} 
{E^{(x)}_{i}} &=& {\frac{n^{(x)}_i+\sum^{2}_{y=1}\sum_{\substack{\forall j\neq i, \\f^{(x)}_i=f^{(y)}_j}}g^{(y)}_{ji}P^{(y)}_{j}}{g^{(x)}_{ii}}},\label{effintf}\\
\gamma^{(x)}_{i} &=& \frac{P^{(x)}_{i}}{E^{(x)}_{i}},\\
\eta^{(x)}_{i} &=& W^{(x)}_{i} \log_{2}(1+\gamma^{(x)}_{i}),
\end{eqnarray}
where {$W^{(x)}_{i}$ is the bandwidth of channel $f^{(x)}_{i}$} and $x\in \{1,2\}$. Corresponding to the access links we define four $n \times n$ normalized cross-link gain matrices ${\bf{F}}_{xy}$ as given by {
\begin{equation} 
\label{gainmatrix1}
{\bf{F}}_{xy}(i,j) =
\begin{cases}
0 \mbox{, \hspace{0.5cm} if $i= j$ or $f^{(x)}_i \neq f^{(y)}_j$} \\
\frac{g^{(y)}_{ji}}{g^{(x)}_{ii}} \mbox{,\hspace{0.2cm} otherwise}
\end{cases}
\end{equation}
that capture the interference from UE $j$ on its access link $y \in \{1,2\}$ to PoA of UE $i$ on its access link $x \in \{1,2\}$}. We also define $n-$dimensional vectors ${\bf{D}}_1$ and ${\bf{D}}_{2}$ that represent normalized noise powers on the two access links of the UEs such that ${\bf{D}}_1(i)= n^{(1)}_i/g^{(1)}_{ii}$ and ${\bf{D}}_{2}(i)= n^{(2)}_i/g^{(2)}_{ii}$ for UE $i$ where noise levels are proportional to the respective channel bandwidth \cite{verdu_2002}.
\begin{figure*}[!b]
\normalsize
\hrulefill
\begin{equation}
\label{maxflow_het}
\begin{split}
\eta_N &= \min \left(\eta_b,\sum_{\forall i:a^{(1)}_i=b} \eta^{(1)}_i+ \sum_{\forall i:a^{(2)}_i=b}\eta^{(2)}_i +
\sum^{N_r}_{r=1}\min\left(\eta_r, \sum_{\forall i:a^{(1)}_i=r} \eta^{(1)}_i+ \sum_{\forall i:a^{(2)}_i=r}\eta^{(2)}_i\right) \right) \\
& +\sum^{N_r+N_p}_{p=N_r+1}\min\left(\eta_p, \sum_{\forall i:a^{(1)}_i=p} \eta^{(1)}_i+ \sum_{\forall i:a^{(2)}_i=p}\eta^{(2)}_i\right)
\end{split}
\end{equation}
\end{figure*} 
\section{Problem Formulation}
{The backhaul capacity at the MBS, each PBS and each RS is denoted as $\eta_{b}$, $\eta_{p}$ and $\eta_{r}$ bps, respectively. As shown in Fig.~\ref{fig:furt1}, the MBS and PBSs directly forward data to the backhaul network backbone. An RS forwards its data to the backbone via the MBS}. The aggregate end-to-end data rate (i.e. network's capacity), denoted as $\eta_N$, is defined in equation~\eqref{maxflow_het} using the Max-Flow Min-Cut Theorem \cite{sherali}. Next we define the \emph{rate differential} which indicates the difference between the backhaul capacities of PBS $p$, RS $r$ and MBS $b$ and their respective aggregate data rate demand as given by
\begin{eqnarray}\label{Vr_het1}
V_{p} &=& \eta_{p} - \sum_{\substack{\forall i:a^{(1)}_{i} = p}} \eta^{(1)}_{i} - \sum_{\substack{\forall i:a^{(2)}_{i} = p}} \eta^{(2)}_{i} ,\hspace{0.1in}\mbox{}\\
\nonumber V_{r} &=& \min\left(\eta_r, V^{+}_b\right) - \sum_{\substack{\forall i:a^{(1)}_{i}=r}} \eta^{(1)}_{i} \\
\label{Vr_het2}
       &\;& - \sum_{\substack{\forall i:a^{(2)}_{i}=r}} \eta^{(2)}_{i}, \hspace{1.2in}\mbox{} \\ 
\label{Vr_het3}
V_{b} &=& \eta_b - \sum_{\substack{\forall i:a^{(1)}_{i} = b}} \eta^{(1)}_{i} - \sum_{\substack{\forall i:a^{(2)}_{i} = b}} \eta^{(2)}_{i} - \Gamma, \hspace{0in}\mbox{}
\end{eqnarray}
where
\begin{eqnarray}
\nonumber
\Gamma = \sum^{N_r}_{r=1}\min\left(\eta_r, \sum_{\substack{\forall i:a^{(1)}_i=r}} \eta^{(1)}_i+\sum_{\substack{\forall i:a^{(2)}_i=r}}\eta^{(2)}_i\right).     
\end{eqnarray}
The rate differential of an RS depends on the backhaul capacity of the MBS in equation~\eqref{Vr_het2}. 

We denote the rate differential of the PoA $a^{(x)}_i$ associated with the access link $x$ of UE $i$ as $V^{(x)}_i$. Whenever $V^{(x)}_i < 0$, the backhaul link at the PoA represents a \emph{bottleneck} link that limits the end-to-end data rate of UE $i$. {In this case the UE, which has a non-bottleneck access link,} cannot improve its data rate by increasing transmit power on that link and instead it may switch to power minimization. Thus, $V^{(x)^{+}}_i=\min\left(V^{(x)}_i,0\right)$ denotes the achievable rate improvement at a PoA for UE $i$. 

A UE employs waterfilling as the optimal allocation strategy to maximize data rate when the backhaul capacities of both its access links are sufficiently high \cite{perez05, Ahmadtop2015}.
\begin{theorem}
When the backhaul capacities is high enough, UE $i$ can maximize its data rate using a waterfilling power allocation such that \\
$P^{(1)^*}_{i} = \min \left(P_{\max,i},\frac{\left(W^{(1)}_i P_{\max,i}-W^{(2)}_{i}E^{(1)}_{i}+W^{(1)}_{i}E^{(2)}_{i}\right)^{+}}{W^{(1)}_{i}+W^{(2)}_i}\right) $ and \\
$P^{(2)^*}_{i} = \min \left(P_{\max,i},\frac{\left(W^{(2)}_i P_{\max,i}-W^{(1)}_{i}E^{(2)}_{i}+W^{(2)}_{i}E^{(1)}_{i}\right)^{+}}{W^{(1)}_{i}+W^{(2)}_i}\right)$. \label{WF_het} 
\end{theorem}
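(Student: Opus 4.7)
The plan is to recognize that when the backhaul capacities at $a^{(1)}_i$ and $a^{(2)}_i$ are large enough that none of the backhaul-derived cuts in \eqref{maxflow_het} become active on UE $i$, UE $i$'s end-to-end throughput reduces to the sum $\eta^{(1)}_i + \eta^{(2)}_i$ of its two access-link rates, and so the data-rate maximization problem is purely a link-layer one. Because the two access links of UE $i$ use orthogonal channels ($f^{(1)}_i \neq f^{(2)}_i$), the effective interferences $E^{(1)}_i, E^{(2)}_i$ defined in \eqref{effintf} do not depend on $P^{(1)}_i$ or $P^{(2)}_i$, so from UE $i$'s viewpoint the problem decouples from its own power choices on the other link and reads
\begin{align*}
\max_{P^{(1)}_i, P^{(2)}_i \geq 0}\; & W^{(1)}_i\log_2\!\left(1 + \tfrac{P^{(1)}_i}{E^{(1)}_i}\right) + W^{(2)}_i\log_2\!\left(1 + \tfrac{P^{(2)}_i}{E^{(2)}_i}\right) \\
\text{s.t.}\; & P^{(1)}_i + P^{(2)}_i \leq P_{\max,i}.
\end{align*}

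Since the objective is concave and the constraint set convex, the KKT conditions are necessary and sufficient. I would form the Lagrangian with multiplier $\lambda \geq 0$ on the power budget, derive stationarity
\[
\frac{W^{(x)}_i}{(\ln 2)\,(E^{(x)}_i + P^{(x)}_i)} = \lambda, \quad x \in \{1,2\},
\]
and obtain the familiar waterfilling form $P^{(x)^*}_i = \mu W^{(x)}_i - E^{(x)}_i$ with water level $\mu = 1/(\lambda\ln 2)$. Monotonicity of the rate in $P^{(x)}_i$ forces the budget to be active, so solving $P^{(1)^*}_i + P^{(2)^*}_i = P_{\max,i}$ gives $\mu = (P_{\max,i} + E^{(1)}_i + E^{(2)}_i)/(W^{(1)}_i + W^{(2)}_i)$, and back-substitution yields exactly the numerators stated in the theorem.

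Finally, I would handle the two boundary corrections. If the interior expression for $P^{(1)^*}_i$ is negative, complementary slackness forces the optimum onto the face $P^{(1)}_i = 0$; the residual one-variable problem in $P^{(2)}_i$ is strictly increasing, so all of $P_{\max,i}$ is pushed to link $2$. The outer $(\cdot)^+$ in the stated formula sets $P^{(1)^*}_i$ to zero, and a short algebraic rearrangement shows that the condition $W^{(1)}_i P_{\max,i} + W^{(1)}_i E^{(2)}_i < W^{(2)}_i E^{(1)}_i$ which triggered this boundary is precisely the condition under which the interior expression for $P^{(2)^*}_i$ exceeds $P_{\max,i}$, so the outer $\min(P_{\max,i},\,\cdot)$ correctly caps it. The symmetric case $P^{(2)^*}_i < 0$ follows by swapping the roles of the two links. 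The main obstacle I anticipate is not the textbook interior waterfilling step but this boundary bookkeeping, i.e.\ checking that the $(\cdot)^+$ and $\min$ wrappers jointly reproduce the correct KKT-active set in every regime rather than introducing an inconsistent allocation.
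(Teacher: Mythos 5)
Your derivation is correct: the KKT/waterfilling argument, the water level $\mu = (P_{\max,i}+E^{(1)}_i+E^{(2)}_i)/(W^{(1)}_i+W^{(2)}_i)$, and the check that the condition forcing one link's interior solution negative is exactly the condition under which the other link's interior solution exceeds $P_{\max,i}$ (so the $(\cdot)^{+}$ and $\min$ wrappers are mutually consistent) all hold. The paper itself gives no proof here --- it only cites an external reference --- so there is nothing to compare against; your argument is the standard two-channel waterfilling derivation that the cited source presumably contains, and it is self-contained and complete, including the observation that orthogonality of $f^{(1)}_i$ and $f^{(2)}_i$ makes $E^{(1)}_i, E^{(2)}_i$ independent of UE $i$'s own powers.
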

\begin{proof}
See \cite[p.~12]{Ahmadtop2015} for details.
\end{proof}
If only one PoA of a UE has a high enough backhaul capacity then the node can reduce the transmit power on its bottleneck access link and re-allocate the power to the other access link. In either situation the node allocates all of its available transmit power. If the backhaul capacities on both access links of a UE are limited, it is inefficient to expend all of its transmit power. Instead, the UE can aim to achieve the maximum end-to-end data rate by expending minimal transmit power. We formulate this as a multi-objective optimization problem for UE $i$ as follows
\begin{eqnarray} 
\begin{aligned}
\label{opt_func_het}
& \underset{P^{(1)}_{i}, P^{(2)}_{i}}{\text{maximize}}
& & -\left(P^{(1)}_{i}+P^{(2)}_{i}\right) \\
& \text{subject to}
& & P^{(1)}_{i}+P^{(2)}_{i} \leq {P_{\max,i}}, \\
&&& P^{(1)}_{i},P^{(2)}_{i} \geq 0, \\
&&& \psi\left(P^{(1)}_{i}, P^{(2)}_{i}\right)\geq R_{i}^{*} ,\\
\text{where} \:R_{i}^{*} = \: &  \underset{P^{(1)}_{i}, P^{(2)}_{i}}{\text{maximize}}
& & \psi\left(P^{(1)}_{i},P^{(2)}_{i}\right) \\
& \text{subject to}
& & P^{(1)}_{i}+P^{(2)}_{i} \leq {P_{\max,i}}, \\
&&& P^{(1)}_{i},P^{(2)}_{i} \geq 0
\end{aligned}
\end{eqnarray}
To solve \eqref{opt_func_het}, the first step involves computing $R^{*}_{i}$ where the end-to-end data rate improvement of UE $i$ is given by
\begin{eqnarray}
\nonumber
\psi(P^{(1)}_i, P^{(2)}_i) = \min\left(V^{(1)^{+}}_{i},\eta^{(1)}_{i} \right)+ \min\left(V^{(2)^{+}}_{i}, \eta^{(2)}_{i}\right).
\end{eqnarray} 
The second step involves computing the minimum transmit power allocation that can achieve the rate improvement $R^*_i$. Note that the power allocation in Theorem~\ref{WF_het} is a special case of \eqref{opt_func_het} when $V^{(1)^{+}}_{i}$ and  $V^{(2)^{+}}_{i}$ are large enough.

\section{Backhaul state-based Distributed Transmission}
We assume that adaptations occur in time intervals denoted as $k \in \{1,2,..\}$. In interval $k$, the instantaneous effective interference levels $E^{(x)}_{i}(k)$ and the rate differentials $V^{(x)}_i(k)$ are made available to UE $i$. {Optimization~\eqref{opt_func_het} may be solved using a greedy algorithm, where each UE makes a locally optimal decision of its transmit power allocations. However, due to inter-dependence between the interference levels and the rate differentials the greedy approach may result in an erratic and unstable data rate.} 

We propose a heuristic scheme called Backhaul state-based Distributed Transmission (BDT) which results in a significantly better performance as shown later. We further assume that a positive-valued constant $\tau$, known to all UEs, is a rate differential threshold that represents some tolerable load at the backhaul side of the PoAs. 
\begin{table}[t]\caption{Backhaul states for UE $i$.}
\begin{center}
\begin{tabular}{r c p{0.01cm} }
\toprule
State & Rate Differentials\\
${\mathcal{S}}(1)$ & $ V^{(1)}_i(k)\geq 0 $ and $V^{(2)}_i(k)\geq 0 $ \\ 
${\mathcal{S}}(2)$ &$-\tau\leq V^{(1)}_i(k)<0 $ and $V^{(2)}_i(k)\geq 0 $ \\
${\mathcal{S}}(3)$ & $V^{(1)}_i(k)\geq 0 $ and $-\tau\leq V^{(2)}_i(k)<0 $\\
${\mathcal{S}}(4)$ & $-\tau\leq V^{(1)}_i(k)<0 $ and $-\tau\leq V^{(2)}_i(k)<0$\\
${\mathcal{S}}(5)$ & $ V^{(1)}_i(k)\geq 0 $ and $V^{(2)}_i(k)< -\tau$ \\
${\mathcal{S}}(6)$ & $ V^{(1)}_i(k)<-\tau $ and $V^{(2)}_i(k)\geq 0$\\
${\mathcal{S}}(7)$ & $-\tau\leq V^{(1)}_i(k)<0 $ and $V^{(2)}_i(k)< -\tau$\\
${\mathcal{S}}(8)$ & $V^{(1)}_i(k)<-\tau $ and $-\tau \leq V^{(2)}_i(k)< 0$\\
${\mathcal{S}}(9)$ & $V^{(1)}_i(k)<-\tau $ and $V^{(2)}_i(k)< -\tau$\\
\bottomrule
\end{tabular}
\end{center}
\label{tab:TableOfNotationForMyResearch}
\end{table}
In the proposed scheme, the following three backhaul states are encountered at any POA. When $V^{(x)}_i(k)\geq 0$, UE $i$ can improve its data rate by increasing its transmit power. When $V^{(x)}_i(k)< -\tau$, the PoA is overloaded and the UE is inefficiently using transmit power. In the intermediate case, namely $ -\tau \leq V^{(x)}_i(k)<0$, the UE can maintain its transmit power (and data rate) since the rate differential is tolerable. With dual connectivity, there are nine possible states of the backhaul links at the two PoAs of each UE as shown in Table~\ref{tab:TableOfNotationForMyResearch}. The UE can adapt to the states in one of the following ways: (i) waterfilling, (ii) maintaining transmit power, (iii) reducing its transmit power, and (iv) re-allocating transmit power between its access links. Accordingly, our proposed transmit power allocation algorithm is given by
\begin{equation} 
\label{power_alloc_het}
P^{(1)}_i(k+1) =
\begin{cases}
P^{(1)*}_i(k) \mbox{, if ${\mathcal{S}}(1)$} \\
 P^{(1)}_i(k) \mbox{, if ${\mathcal{S}}(2), {\mathcal{S}}(4), {\mathcal{S}}(7)$} \\
 P_{\max, i}-P^{(2)}_i(k+1) \mbox{, if ${\mathcal{S}}(3), {\mathcal{S}}(5)$} \\
ZP^{(1)}_{i}(k) \mbox{, if ${\mathcal{S}}(6), {\mathcal{S}}(8), {\mathcal{S}}(9)$} 
\end{cases}
\end{equation}
\begin{equation} 
\label{power_alloc_het2}
P^{(2)}_i(k+1) =
\begin{cases}
P^{(2)*}_i(k) \mbox{, if ${\mathcal{S}}(1)$}\\
P^{(2)}_i(k) \mbox{, if ${\mathcal{S}}(3),{\mathcal{S}}(4), {\mathcal{S}}(8)$} \\
P_{\max, i}-P^{(1)}_i(k+1)\mbox{, if ${\mathcal{S}}(2), {\mathcal{S}}(6)$} \\
ZP^{(2)}_{i}(k) \mbox{, if ${\mathcal{S}}(5), {\mathcal{S}}(7), {\mathcal{S}}(9)$}  
\end{cases}
\end{equation}
In state ${\mathcal{S}}(1)$, the UE maximizes its data rate using waterfilling allocation, where $P^{(1)*}_i(k)$ and $P^{(2)*}_i(k)$ are computed using Theorem~\ref{WF_het}. In states ${\mathcal{S}}(2)$ and ${\mathcal{S}}(3)$, the backhaul capacity for one access link is high enough and the load is within the tolerable limit on the other. In states ${\mathcal{S}}(5)$ and ${\mathcal{S}}(6)$, only one access point's backhaul link suffers overloading. In this case, the UE adapts by reducing its transmit power on the link with an overloaded backhaul by a constant factor $Z: 0<Z<1$ and re-allocating it to the other access link which has a high enough backhaul capacity. In states ${\mathcal{S}}(7)$, ${\mathcal{S}}(8)$ and ${\mathcal{S}}(9)$, the UE reduces transmit power on either or both access links. This continues until either the transmit power diminishes to zero or the associated overloading level drops to an acceptable level. In state ${\mathcal{S}}(4)$, where the load on both backhaul links is within the acceptable range, the UE maintains its transmit power. 

In a practical implementation of BDT, a 2-bit feedback on the backhaul state at a PoA can be used to achieve a low overhead. Moreover, since $V^{(x)}_i(k)$ is computed for an interval spanning several LTE resource blocks comprising several hundred bits the feedback can be piggybacked on the control channel feedback that is already provisioned in the standard~\cite{3gppDCP12}, thereby imposing negligible overhead. The distributed BDT adaptations converge when this feedback is provided to the UEs at every power control iteration as illustrated next. 
  
\section{Convergence} 
{BDT differs from a greedy algorithm implementation of equation~\eqref{opt_func_het} in the following ways: (i) when $V^{(x)^+}_i=0$, the UE  decreases its transmit power iteratively instead of immediately setting it to zero, and (ii) BDT aims to achieve $V^{(x)}_i\geq -\tau$ to avoid under-utilizing the backhaul capacities.} These features allow distributed allocations by the nodes to converge whereas a greedy algorithm may not result in stable system performance. A vector representation of the effective interference (equation~\eqref{effintf}) in iteration $k$ for all $n$ UEs at their two access links is given by{
\begin{equation} \label{effintfvec}
\begin{split}
{\bf{E}}_1(k) &= {\bf{D}}_1 + {\bf{F}}_{11}{\bf{P}}_1(k)+ {\bf{F}}_{21}{\bf{P}}_2(k) \: \text{,}\\
{\bf{E}}_{2}(k) &= {\bf{D}}_{2} + {\bf{F}}_{22}{\bf{P}}_2(k)+ {\bf{F}}_{12}{\bf{P}}_1(k), 
\end{split}
\end{equation}}
respectively. We define ${\bf{W}}_1=\mbox{Diag}[W^{(1)}_1, \cdots, W^{(1)}_n]$ and ${\bf{W}}_2=\mbox{Diag}[W^{(2)}_1, \cdots, W^{(2)}_n]$. Next we define 
\begin{equation}
\begin{split}
{\bf{\Lambda}}&=[{\bf{W}}_1+{\bf{W}}_2]^{-1}\\
&= \mbox{Diag}\left[\frac{1}{W^{(1)}_1+W^{(2)}_1}, \cdots, \frac{1}{W^{(1)}_n+W^{(2)}_n}\right].
\end{split}
\end{equation}
\subsection{High backhaul capacity regime}
The convergence of BDT when the backhaul capacity at each PoA is sufficiently high is discussed next.
\begin{theorem}
When BDT is used to solve equation~\eqref{opt_func_het}, the transmit powers converge to \\  {${\bf{P}}_1^*=[{\bf{I}}-{\bf{\Lambda}}{[}{\bf{W}}_2({\bf{F}}_{21}-{\bf{F}}_{11})+{\bf{W}}_1({\bf{F}}_{12}-{\bf{F}}_{22}){]}]^{-1}\\
{\bf{\Lambda}}{[}{\bf{P}}_{\max}-{\bf{W}}_2{\bf{D}}_{1}+{\bf{W}}_1{\bf{D}}_{2}+ ({\bf{W}}_1{\bf{F}}_{22}-{\bf{W}}_2{\bf{F}}_{21}){\bf{P}}_{\max} {]}$ and ${\bf{P}}_2^*= {\bf{P}}_{\max}-{\bf{P}}_1^*$} , given that ${\bf{0}}<{\bf{P}}_1^*< {\bf{P}}_{\max}$ and each PoA has a large enough backhaul capacity. \label{theoremx_het}
\end{theorem}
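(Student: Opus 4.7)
The plan is to observe that in the high-backhaul regime every UE remains in state $\mathcal{S}(1)$ at every iteration, so the BDT rule \eqref{power_alloc_het}--\eqref{power_alloc_het2} reduces to iterating the waterfilling allocation of Theorem \ref{WF_het}. The hypothesis $\mathbf{0}<\mathbf{P}_1^*<\mathbf{P}_{\max}$ guarantees that at the fixed point neither the $(\cdot)^{+}$ truncation nor the outer $\min(\cdot,P_{\max,i})$ in Theorem \ref{WF_het} is active, so in a neighbourhood of $\mathbf{P}_1^*$ both updates are purely affine in the current power vectors.

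First I would write the two components of Theorem \ref{WF_het} in stacked vector form, using $\boldsymbol{\Lambda}=[\mathbf{W}_1+\mathbf{W}_2]^{-1}$ to absorb the normalising denominators, giving
\begin{equation*}
\mathbf{P}_1(k+1)=\boldsymbol{\Lambda}\bigl[\mathbf{W}_1\mathbf{P}_{\max}-\mathbf{W}_2\mathbf{E}_1(k)+\mathbf{W}_1\mathbf{E}_2(k)\bigr],
\end{equation*}
and analogously for $\mathbf{P}_2(k+1)$. Adding these two expressions shows immediately that $\mathbf{P}_1(k+1)+\mathbf{P}_2(k+1)=\mathbf{P}_{\max}$, so the power budget is always saturated and at equilibrium $\mathbf{P}_2^{*}=\mathbf{P}_{\max}-\mathbf{P}_1^{*}$, which is the second assertion of the theorem.

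Next I would substitute the vector effective-interference identities \eqref{effintfvec} for $\mathbf{E}_1(k)$ and $\mathbf{E}_2(k)$, eliminate $\mathbf{P}_2(k)$ via $\mathbf{P}_2(k)=\mathbf{P}_{\max}-\mathbf{P}_1(k)$, and collect terms linear in $\mathbf{P}_1(k)$. This produces an affine recursion $\mathbf{P}_1(k+1)=\mathbf{M}\,\mathbf{P}_1(k)+\mathbf{c}$ with $\mathbf{M}=\boldsymbol{\Lambda}\bigl[\mathbf{W}_2(\mathbf{F}_{21}-\mathbf{F}_{11})+\mathbf{W}_1(\mathbf{F}_{12}-\mathbf{F}_{22})\bigr]$ and a constant vector $\mathbf{c}$ collecting the $\mathbf{P}_{\max}$, $\mathbf{D}_1$, $\mathbf{D}_2$, $\mathbf{F}_{21}$, and $\mathbf{F}_{22}$ contributions. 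Imposing $\mathbf{P}_1^{*}=\mathbf{M}\mathbf{P}_1^{*}+\mathbf{c}$ and inverting $\mathbf{I}-\mathbf{M}$ yields the closed-form expression stated in the theorem.

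The main obstacle is not the algebra above but the convergence claim: an affine iteration approaches its fixed point from arbitrary initial powers only when $\rho(\mathbf{M})<1$. To establish this I would exploit the fact that by \eqref{gainmatrix1} each $\mathbf{F}_{xy}$ is non-negative with zero diagonal, while $\boldsymbol{\Lambda}\mathbf{W}_1$ and $\boldsymbol{\Lambda}\mathbf{W}_2$ are diagonal with entries in $(0,1)$ summing to one per row. This lets $\mathbf{M}$ be bounded by a difference of non-negative matrices whose row sums are controlled by the maximum normalised cross-gain, so $\rho(\mathbf{M})<1$ whenever the interferers' cross-gains are dominated by the direct gains, which is the standard feasibility condition in the Foschini--Miljanic/Yates framework \cite{foschini_93, bambos2000}. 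An alternative route is to recognise the restricted map on $\{\mathbf{P}_1+\mathbf{P}_2=\mathbf{P}_{\max}\}$ as a standard interference function and to invoke Yates-type monotonicity and scalability directly. Once $\rho(\mathbf{M})<1$ is in hand, $\mathbf{P}_1(k)\to(\mathbf{I}-\mathbf{M})^{-1}\mathbf{c}$ from any admissible start, and the assumed strict interior position of $\mathbf{P}_1^{*}$ guarantees that after finitely many iterations the clipping operations become inactive, so the affine analysis governs the tail of the trajectory and delivers the stated limit.
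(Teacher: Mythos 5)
Your proposal follows essentially the same route as the paper: reduce the $\mathcal{S}(1)$ waterfilling update to the affine recursion $\mathbf{P}_1(k+1)=\mathbf{N}+\mathbf{M}\mathbf{P}_1(k)$ by eliminating $\mathbf{P}_2(k)=\mathbf{P}_{\max}-\mathbf{P}_1(k)$ and substituting the interference identities \eqref{effintfvec}, then read off the fixed point $(\mathbf{I}-\mathbf{M})^{-1}\mathbf{N}$ under $\rho(\mathbf{M})<1$. The only substantive difference is that the paper simply invokes $\rho(\mathbf{M})<1$ as a stated condition for the linear system to converge rather than deriving it, so your additional diagonal-dominance/standard-interference-function sketch goes beyond (and is not required by) the paper's own argument --- and is worth treating with care, since $\mathbf{M}$ is a difference of nonnegative matrices and need not fit the Yates framework directly.
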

\begin{proof}
{When backhaul capacity at each PoA is sufficiently high, each UE allocates transmit power using waterfilling (see the adaptation rule for $\mathcal{S}(1)$ in equations~\eqref{power_alloc_het} and \eqref{power_alloc_het2}), where the transmit power allocations on the dual connections are related as ${\bf{P}}_{2}(k)={\bf{P}}_{\max}-{\bf{P}}_{1}(k)$. Given that ${\bf{0}}<{\bf{P}}_1^*< {\bf{P}}_{\max}$, the power updates for state $\mathcal{S}(1)$ in equation~\eqref{power_alloc_het} can be represented as a linear system and elaborated using equation~\eqref{effintfvec} as given by 
\begin{align}
\nonumber
{\bf{P}}_1(k+1)&={\bf{\Lambda}}\left[{\bf{W}}_1{\bf{P}}_{\max}-{\bf{W}}_2{\bf{E}}_1(k)+{\bf{W}}_1{\bf{E}}_{2}(k)\right] \\
\nonumber
&={\bf{\Lambda}}\left[ {\bf{W}}_{1}{\bf{P}}_{\max} - {\bf{W}}_{2}({\bf{D}}_{1}+{\bf{F}}_{11}{\bf{P}}_{1}(k)+ \right.\\
\nonumber
&\hphantom{={\bf{\Lambda}}\:\:}\left. {\bf{F}}_{21}({\bf{P}}_{\max}-{\bf{P}}_{1}(k)) +{\bf{W}}_2({\bf{D}}_{2} + \right.\\
\nonumber
&\hphantom{={\bf{\Lambda}}\:\:} \left. {\bf{F}}_{22}({\bf{P}}_{\max} - {\bf{P}}_1(k) + {\bf{F}}_{21}{\bf{P}}_{1}(k))) \right] \\
\label{update_ue_rs}
&={\bf{N}}+{\bf{M}}{\bf{P}}_1(k),
\end{align}
where matrices ${\bf{N}}$ and ${\bf{M}}$ are defined as given by
\begin{align}
\nonumber
{\bf{N}}&= {\bf{\Lambda}}\left[ {\bf{P}}_{\max}-{\bf{W}}_2{\bf{D}}_{1}+{\bf{W}}_1{\bf{D}}_{2}+ \right. \\           
         &  \hphantom{={\bf{\Lambda}}\:\:} \left.   ({\bf{W}}_1{\bf{F}}_{22}-{\bf{W}}_2{\bf{F}}_{21}){\bf{P}}_{\max} \right] \\
{\bf{M}}&= {\bf{\Lambda}}[{\bf{W}}_2({\bf{F}}_{21}-{\bf{F}}_{11})+{\bf{W}}_1({\bf{F}}_{12}-{\bf{F}}_{22})].
\end{align}
By definition \cite[p.~618]{meyer04}, if the spectral radius of matrix $\bf{M}$ is less than one the linear system in equation~\eqref{update_ue_rs} evolves to a fixed point which can be derived as given by} 
\begin{align}
\nonumber
{\bf{P}}_1(k)&={\bf{N}}+{\bf{M}}\left({\bf{N}} + {\bf{M}}\left({\bf{N}}+{\bf{M}}\left(\cdots {\bf{P}_1}(0) \right) \right)\right) \\
\nonumber
{\bf{P}}^*_1&=\lim\limits_{k \rightarrow \infty} {\bf{P}}_1(k)=\left[{\bf{I-M}}\right]^{-1}{\bf{N}}\\
\nonumber
&=[{\bf{I}}-{\bf{\Lambda}}{[}{\bf{W}}_2({\bf{F}}_{21}-{\bf{F}}_{11})+{\bf{W}}_1({\bf{F}}_{12}-{\bf{F}}_{22}){]}]^{-1}\\
\nonumber
& \hphantom{=\:\:\:} {\bf{\Lambda}}\left[ {\bf{P}}_{\max}-{\bf{W}}_2{\bf{D}}_{1}+{\bf{W}}_1{\bf{D}}_{2}+ \right. \\
\label{powerseries}
&  \hphantom{={\bf{\Lambda}}\:\:} \left.   ({\bf{W}}_1{\bf{F}}_{22}-{\bf{W}}_2{\bf{F}}_{21}){\bf{P}}_{\max} \right] \; \\
\text{and} \; {\bf{P}}_2^* &= \lim\limits_{k \rightarrow \infty} {\bf{P}}_2(k)={\bf{P}}_{\max}- {\bf{P}}^*_1 ,  
\end{align}
where ${\bf{P}}_1(0)$ is the initial transmit power vector of the first links and the corresponding transmit powers on the second access links are simply the difference between ${\bf{P}}_{\max}$ and ${\bf{P}}_1(k+1)$ from equation~\eqref{update_ue_rs}.
\end{proof}
\subsection{Limited backhaul capacity regime}
Next, we consider the general case where the backhaul capacity at the PoAs is limited. {The spectral radius operator is denoted by $\rho(\cdot)$. Note that, whenever $V^{(x)}_i < 0$ the backhaul link at the PoA represents a \emph{bottleneck} link while the access link of UE $i$ is considered a \emph{non-bottleneck} link.}
\begin{lemma}
Under BDT, if the spectral radius $\rho\left({\bf{M}}\right)<1$ then the SINR of a non-bottleneck access link changes such that $\gamma^{(x)}_i(k+2)> Z^2 \gamma^{(x)}_i(k)$. \label{nbl_SINR_iter}
\end{lemma}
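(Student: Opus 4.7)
The plan is to decompose
$\gamma^{(x)}_i(k+2)/\gamma^{(x)}_i(k) = \bigl[P^{(x)}_i(k+2)/P^{(x)}_i(k)\bigr]\cdot\bigl[E^{(x)}_i(k)/E^{(x)}_i(k+2)\bigr]$
and bound the own-power and interference factors separately, using the piecewise BDT rules in equations~\eqref{power_alloc_het}--\eqref{power_alloc_het2} for the former and the vector interference recursion~\eqref{effintfvec} together with the hypothesis $\rho(\mathbf{M})<1$ for the latter.

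First I would handle the own-power factor. A non-bottleneck access link of UE~$i$ satisfies $V^{(x)}_i(k)<0$ by the definition recalled just before the lemma, so Table~\ref{tab:TableOfNotationForMyResearch} restricts the state at iteration~$k$ to $\{\mathcal{S}(2),\mathcal{S}(4),\mathcal{S}(6),\mathcal{S}(7),\mathcal{S}(8),\mathcal{S}(9)\}$ when $x=1$ and to the symmetric set when $x=2$. Reading off the corresponding entries of \eqref{power_alloc_het}--\eqref{power_alloc_het2}, every one of these rules is either the ``maintain'' rule or the ``multiply-by-$Z$'' rule on link~$x$; neither waterfilling nor re-allocation appears, because those branches are activated only when the conjugate link is non-bottleneck. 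This yields the per-step bound $P^{(x)}_i(k+1)\geq Z\,P^{(x)}_i(k)$, which iterates directly to $P^{(x)}_i(k+2)\geq Z^{2}\,P^{(x)}_i(k)$.

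Next I would bound the interference factor. Equation~\eqref{effintfvec} writes $E^{(x)}_i$ as an affine function of the joint transmit-power vector, so it suffices to control how that vector evolves under BDT over two steps. On the waterfilling branch the update is precisely the linear map $\mathbf{P}_1(k+1)=\mathbf{N}+\mathbf{M}\,\mathbf{P}_1(k)$ from Theorem~\ref{theoremx_het}, which is a contraction by the hypothesis $\rho(\mathbf{M})<1$; on the remaining branches, the maintain and $Z$-shrink rules are non-expansive in any monotone norm, and the re-allocation rule merely redistributes a fixed power budget $P_{\max,i}$ between the two links of the same UE. Assembling these observations I expect to establish $E^{(x)}_i(k+2)\leq E^{(x)}_i(k)$, with strict inequality coming from the strictly positive noise floor $\mathbf{D}_x$ whenever at least one interferer actually shrinks. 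Substituting the two bounds into the SINR ratio then gives $\gamma^{(x)}_i(k+2)>Z^{2}\gamma^{(x)}_i(k)$.

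The main obstacle is the interference step. BDT is genuinely piecewise: each of the $n$ UEs independently selects one of the nine state-dependent branches at every iteration, and those branches are coupled through the cross-link gain matrices $\mathbf{F}_{xy}$. Translating the spectral-radius hypothesis, which is naturally a statement about the pure waterfilling dynamics underpinning Theorem~\ref{theoremx_het}, into a uniform non-expansion property that survives every joint-branch configuration is the delicate part; in particular, recovering the strict inequality rather than a weak one will most likely hinge on the strictly positive noise term $\mathbf{D}_x$ in the recursion~\eqref{effintfvec}, which keeps $E^{(x)}_i$ bounded away from zero and breaks the tie in the borderline case where the own-power ratio equals exactly $Z^{2}$.
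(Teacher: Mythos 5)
Your own-power step is sound as far as it goes: reading the branches of \eqref{power_alloc_het}--\eqref{power_alloc_het2} for a link with $V^{(x)}_i(k)<0$ does give $P^{(x)}_i(k+1)\in\{P^{(x)}_i(k),\,Z P^{(x)}_i(k)\}$, hence $P^{(x)}_i(k+1)\geq Z P^{(x)}_i(k)$. The genuine gap is in the interference step. The claim $E^{(x)}_i(k+2)\leq E^{(x)}_i(k)$ is false in general, and the reason is precisely the coupling the lemma exists to control: the waterfilling branch is \emph{anti-monotone} in the interference. From Theorem~\ref{WF_het}, $P^{(1)*}_j$ increases when $E^{(1)}_j$ decreases, so when UE $i$ scales its power down by $Z$, a co-channel UE $j$ whose PoA has spare backhaul capacity responds by \emph{raising} its power, which raises $E^{(x)}_i$ at iteration $k+2$. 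The BDT map is therefore not non-expansive branch by branch in any monotone sense, and the positive noise floor cannot rescue an inequality that points the wrong way. What the paper actually argues is that this rebound is bounded rather than absent: because $\mathbf{D}_x>0$, UE $i$'s one-step reduction by $Z$ lowers the co-channel effective interference only to $E^{(x)}_j(k+1)>Z E^{(x)}_j(k)$, so (invoking $\rho(\mathbf{M})<1$) the waterfilling response satisfies $P^{(x)}_j(k+2)<Z^{-1}P^{(x)}_j(k)$ and hence $E^{(x)}_i(k+2)<Z^{-1}E^{(x)}_i(k)$: the interference may go \emph{up}, but by a factor strictly less than $Z^{-1}$.

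This also breaks your bookkeeping. The paper obtains $Z^2$ by pairing a \emph{single} rescale, $P^{(x)}_i(k+2)=Z P^{(x)}_i(k)$ (it starts from the equilibrium of Theorem~\ref{theoremx_het} and lets UE $i$ rescale once), with the bound $E^{(x)}_i(k+2)<Z^{-1}E^{(x)}_i(k)$, giving $\gamma^{(x)}_i(k+2)>Z\cdot Z\,\gamma^{(x)}_i(k)$. If you instead take the worst case $P^{(x)}_i(k+2)\geq Z^{2}P^{(x)}_i(k)$ over both iterations, the correct interference bound only yields $\gamma^{(x)}_i(k+2)>Z^{3}\gamma^{(x)}_i(k)$, which is weaker than the statement. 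So your decomposition cannot recover the stated constant unless you either justify that the non-bottleneck link rescales at most once in the two iterations considered, or prove the interference is genuinely non-increasing --- and the latter is not true. (A secondary issue: your two-step own-power bound also presumes the link remains non-bottleneck at $k+1$; if the state flips to $\mathcal{S}(1)$, $\mathcal{S}(3)$ or $\mathcal{S}(5)$, the waterfilling or re-allocation branch can drive $P^{(x)}_i(k+2)$ below $Z^{2}P^{(x)}_i(k)$.)
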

\begin{proof}
Consider a scenario where the transmit power levels are at equilibrium as per Theorem~\ref{theoremx_het}. Now, assume that a UE $i$ with a non-bottleneck link re-scales its transmit power once by $Z$ and a co-channel UE $j$ with a bottleneck access link uses waterfilling allocation to exploit interference reduction and increase transmit power. Due to the constraint on the spectral radius there is a limit on how much interference may then increase on the non-bottleneck access link of UE $i$. The implied sequence of transmit power updates is as follows: $P^{(x)}_i(k+1)= Z P^{(x)}_i(k)$ $\rightarrow$ $E^{(x)}_j(k+1)> Z E^{(x)}_j(k)$ $\rightarrow$
$P^{(x)}_j(k+2) < Z^{-1} P^{(x)}_j(k) $ $\rightarrow$ $E^{(x)}_i(k+2) < Z^{-1} E^{(x)}_i(k) $. Thus, the SINR is given by $\gamma^{(x)}_i(k+2) \left(= \frac{P^{(x)}_i(k+2)}{E^{(x)}_i(k+2)} \right) > Z^2 \gamma^{(x)}_i(k)$. 
\end{proof}
Lemma~\ref{nbl_SINR_iter} indicates that the possible change in the rate of a non-bottleneck link between any 2 consecutive iterations is bounded. 

\begin{theorem}
For a fixed rate differential threshold $\tau$, BDT adaptations converge for some value of transmit power scaling factor $0<Z<1$ if the spectral radius $\rho\left({\bf{M}}\right)<1$.\label{finiteConvf_net}
\end{theorem}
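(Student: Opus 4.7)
The plan is to establish convergence by analyzing the piecewise affine structure of the BDT update map and exploiting $\rho(\mathbf{M})<1$ together with the geometric scaling by $Z<1$. I would decompose the analysis into three regimes: the pure waterfilling regime in which every UE sits in $\mathcal{S}(1)$, the pure overload--shrink regime in which at least one link per UE is in $\mathcal{S}(6),\mathcal{S}(8)$, or $\mathcal{S}(9)$ (or the link-2 analogs), and mixed regimes where some UEs reallocate between their two links or maintain transmit power.

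First I would formalize the update rules~\eqref{power_alloc_het} and~\eqref{power_alloc_het2} as a piecewise affine map $\Phi\colon[0,\mathbf{P}_{\max}]\to[0,\mathbf{P}_{\max}]$ from $[\mathbf{P}_1(k),\mathbf{P}_2(k)]$ to $[\mathbf{P}_1(k{+}1),\mathbf{P}_2(k{+}1)]$. Each region of the partition corresponds to a particular joint backhaul-state assignment across the UEs, and on each region $\Phi$ is affine after substituting the interference expression~\eqref{effintfvec} into the state-dependent rule. On the all-$\mathcal{S}(1)$ region this affine map is precisely the iteration already analyzed in Theorem~\ref{theoremx_het}, with contraction factor $\rho(\mathbf{M})<1$; on the pure-shrink regions the relevant coordinates are multiplied by the contraction $Z$; on the maintain-power regions the iteration is the identity; on the reallocation regions the total $P^{(1)}_i+P^{(2)}_i$ is preserved while the split responds affinely to the interference.

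Second I would show that any link persistently in a state with $V^{(x)}_i<-\tau$ either drives its own transmit power geometrically toward zero or else forces its own rate differential back above $-\tau$. If UE $i$'s link-$1$ state remains in $\mathcal{S}(6),\mathcal{S}(8)$, or $\mathcal{S}(9)$ for $\ell$ consecutive iterations, then $P^{(1)}_i$ is multiplied by $Z^{\ell}$, so the co-channel interference contribution from UE $i$ decays geometrically. By Lemma~\ref{nbl_SINR_iter}, this bounds the per-step SINR change on any non-bottleneck co-channel link by a factor of $Z^{2}$, so the aggregate demand at the offending PoA must eventually drop past the $-\tau$ threshold or the power must vanish. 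Combined with the contractivity of the $\mathcal{S}(1)$ piece, once the joint state assignment stops changing, $\Phi$ is an affine contraction on a reduced linear subsystem whose fixed point is a direct generalization of~\eqref{powerseries}, and the sequence converges.

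The main obstacle I anticipate is ruling out persistent cycling of the state labels themselves, since a UE can cross the boundary $V^{(x)}_i=0$ or $V^{(x)}_i=-\tau$ as others' updates shift the interference landscape. Choosing $Z$ close enough to $1$ controls how much each $P^{(x)}_i$, and thus each $V^{(x)}_i$, can move per iteration, but the admissible $Z$ depends jointly on $\tau$, the current separation from the boundary values, and the gain matrices $\mathbf{F}_{xy}$. A Lyapunov candidate such as $\sum_{x\in\{1,2\}}\sum_{i}\max\bigl(0,-V^{(x)}_i-\tau\bigr)$ augmented by $\|\mathbf{P}_1-\mathbf{P}_1^{\star}\|+\|\mathbf{P}_2-\mathbf{P}_2^{\star}\|$ on the waterfilling piece seems necessary to formally preclude the oscillations that the unmodified greedy algorithm exhibits, and calibrating $Z$ against this Lyapunov function is where the bulk of the technical work lies.
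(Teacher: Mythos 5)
Your proposal identifies the correct structure---a piecewise affine map whose all-$\mathcal{S}(1)$ piece contracts because $\rho(\mathbf{M})<1$, geometric decay by $Z$ on overloaded links, and the need to freeze the joint state assignment so that the iteration reduces to an affine contraction---and it correctly names the crux: ruling out persistent cycling of the state labels. But it stops exactly there. You write that calibrating $Z$ against the proposed Lyapunov function is ``where the bulk of the technical work lies'' without carrying that calibration out, so the proposal as written does not prove the theorem; it reduces it to the unproved claim that a suitable $Z$ exists.

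The paper closes that gap (to the extent that it does) with two devices your plan does not use. First, it argues in two stages: choose $\tau^*$ large enough that no UE ever leaves states $\mathcal{S}(1)$--$\mathcal{S}(4)$, so no rescaling by $Z$ ever occurs and the dynamics are exactly the linear iteration of Theorem~\ref{theoremx_het}, except that UEs in the ``maintain power'' states skip updates; this is treated as an \emph{asynchronous} linear iteration, which converges whenever $\rho(\mathbf{M})<1$. Second, the threshold is lowered to the target $\tau^{\dagger}<\tau^*$, and Lemma~\ref{nbl_SINR_iter} is invoked quantitatively: for $Z$ close enough to one the two-iteration change in every rate differential satisfies $V^{(x)}_i(k)-V^{(x)}_i(k+2)<\tau^{\dagger}$, so the bottleneck/non-bottleneck partition is invariant from then on, the state labels cannot cycle, and the frozen subsystem again converges asynchronously to the fixed point generalizing~\eqref{powerseries}. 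Your Lyapunov candidate is a plausible, arguably more rigorous, alternative route, but until you either establish a decrease inequality for it or reproduce the paper's invariance-of-the-partition step, the central difficulty remains open in your write-up. Note also that your plan treats the ``maintain power'' states as the identity map without addressing why a system in which different coordinates update at different times still converges; the paper needs the asynchronous-iteration machinery precisely for this, and your argument should either invoke it or replace it.
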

\begin{proof}
Suppose we set some high enough rate differential threshold with value $\tau^*$ such that the nodes do not cycle back and forth between rescaling their transmit power by $Z$ and waterfilling allocation. As each UE now operates in states ${\mathcal{S}}(1)$, ${\mathcal{S}}(2)$, ${\mathcal{S}}(3)$ and $\mathcal{S}(4)$, its power allocation will only follow the adaptation rules associated with these states in equations~\eqref{power_alloc_het} and \eqref{power_alloc_het2}. Thus, the evolution of transmit power for the system is equivalent to that in equation~\eqref{powerseries} except that some UEs may not update their transmit power every iteration (i.e. their transmit power level is based on an update from an earlier iteration). This as an \emph{asynchronous iterative} system \cite{frommera2000, Amaarasync2013,waterfilling_Shum07}. It is known that such a system converges if the iterative matrix ${\bf{M}}$ in equation~\eqref{powerseries} has a spectral radius less than one. 

In general, setting a high $\tau$ ensures that the UEs do not switch back and forth between waterfilling and transmit power reduction as any load level is tolerated. An alternative approach is to fix the threshold $\tau$ and instead set the transmit power scaling factor $Z$ in equations~\eqref{power_alloc_het} and \eqref{power_alloc_het2} to some value so that the nodes do not oscillate between different states. 

Now, suppose that at equilibrium we reset the rate differential threshold to some $\tau^{\dagger}<\tau^*$. This might induce the UEs to reduce their transmit power and data rates to bring back $-\tau^{\dagger}\leq V^{(x)}_i(k)<0$. For some $Z$ close to one, the corresponding change in the SINRs and thus the rate differentials between 2 consecutive iterations is such that $V^{(x)}_i(k)-V^{(x)}_i(k+2)<\tau^{\dagger}$ at all PoAs as per Lemma~\ref{nbl_SINR_iter}. Thus, the set of bottleneck links and non-bottleneck links will not change. At this point the transmit power levels on the non-bottleneck links will be maintained. Moreover, since $\rho\left({\bf{M}}\right)<1$ the transmit power of bottleneck links will also asynchronously converge.
\end{proof}
Theorem~\ref{finiteConvf_net} and the ensuing discussion indicates that local adaptations can converge for any $\tau$ albeit at the cost of slow convergence if $Z$ is closer to one. {The convergence results that hold in Theorems V.1 and V.3 apply to the worst case scenario when every UE is transmitting. However, uplink traffic in practical wireless systems is typically bursty. In such situations, there will be less instantaneous interference as some UEs may not be momentarily transmitting. Thus, if the system performance converges in the worst case it will also converge with bursty traffic.} 

\begin{figure*}[!t]
\begin{center}
\subfigure[]{\includegraphics[width=0.48\textwidth]{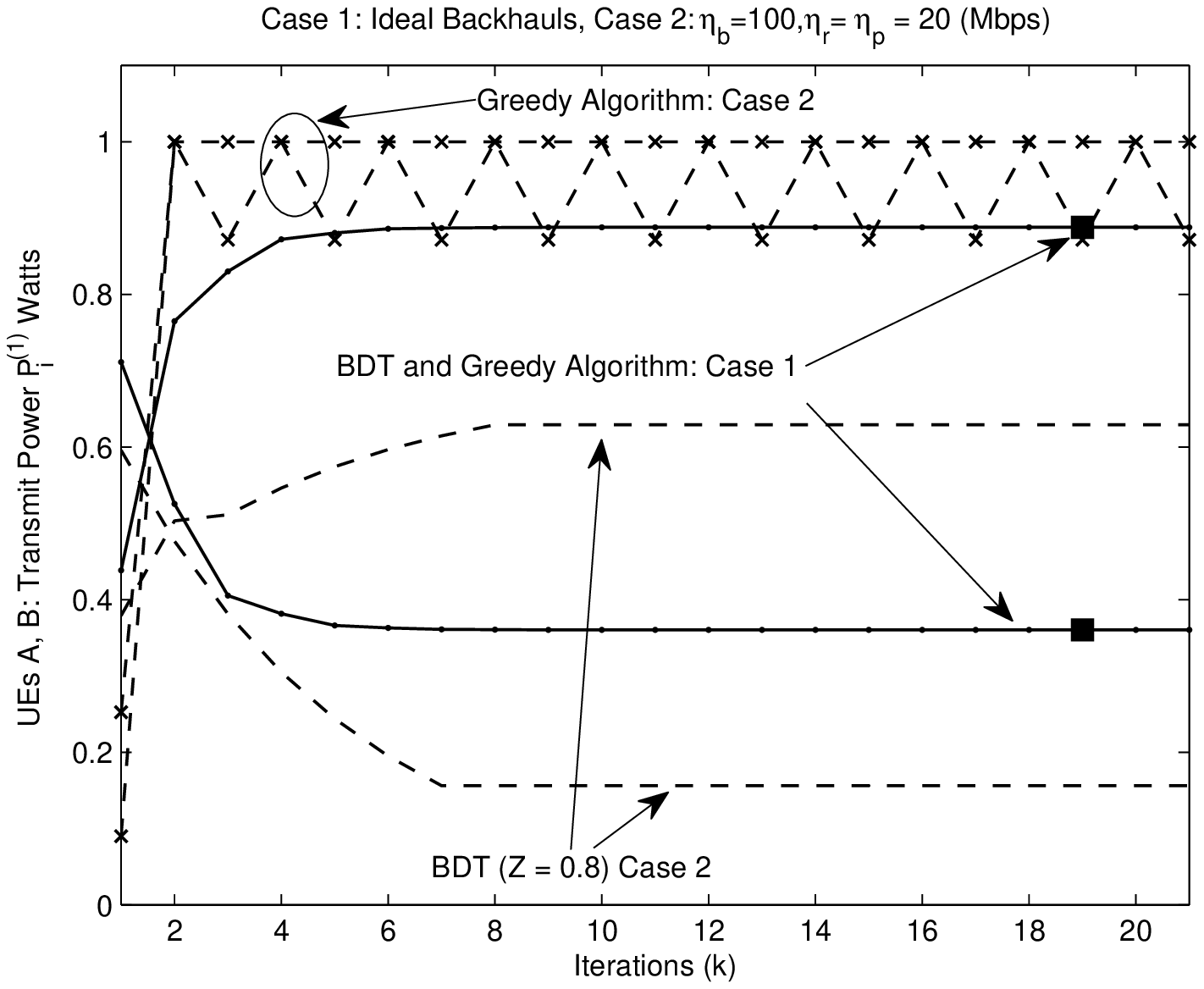}}
\subfigure[]{\includegraphics[width=0.48\textwidth]{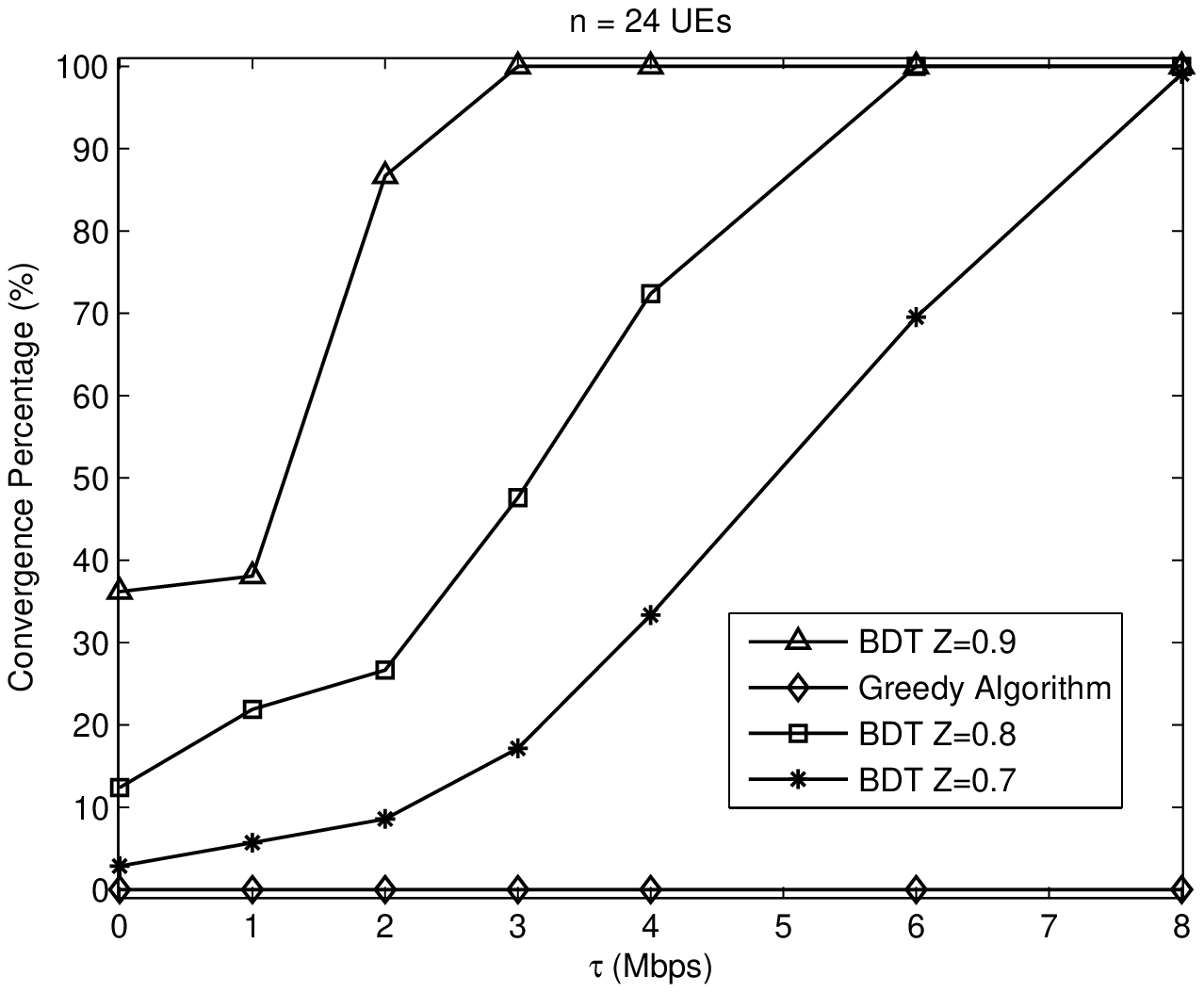}}
\caption{(a) Convergence to fixed point shown in equation~\eqref{powerseries} in the high backhaul regime (only $P^{(1)}_i$ is shown). {(b) Convergence percentage for different rate differential thresholds. We see guaranteed convergence for BDT as per Theorem~\ref{finiteConvf_net} whereas the greedy algorithm has poor convergence properties}.}
\label{fig:convg}
\end{center}
\end{figure*}

\subsection{Fixed Target SINR}
In cellular networks, target SINR is often an important criterion to meet for traditional power control applications \cite{bambos2000,foschini_93,rasti_10}. Consider a network which comprises two sets of UEs: one that has dual connectivity and adapts using BDT and the other that comprises traditional single-link nodes with only one PoA and a fixed target SINR. We show that the distributed power adaptations in the system reach a stable allocation. Let ${\bf{Q}}=\mbox{Diag}[q_1, q_2, \cdots, q_n]$ be a quasi-identity matrix where $q_i=1$ (single-PoA) and $q_i=0$ (UE with dual connectivity). For ease of analysis, we assume that the UEs with a single access point with fixed target SINRs interfere only with the access links ${\bf{L}}_1$ of the UEs with dual connectivity. We can thus modify equation~\eqref{update_ue_rs} as given by
\begin{align}
\nonumber
{\bf{P}}_1(k+1)&={\overline{\bf{Q}}}[{\bf{N}}+{\bf{M}}{\bf{P}}_1(k)] + {\bf{Q}}{\bf{B}}[{\bf{D}}_1+{\bf{F}}_{11}{\bf{P}}_1(k)]\\ 
\label{coexist}
&={\overline{\bf{Q}}}{\bf{N}}+{{\bf{Q}}}{\bf{D}}_1+[{\bf{Q}}{\bf{B}}{\bf{F}}_{11}-{\overline{\bf{Q}}}{\bf{M}}]{\bf{P}}_1(k),
\end{align}
where ${\overline{\bf{Q}}}={\bf{I}}-{\bf{Q}}$, ${\bf{B}}=\mbox{Diag}[\beta_1, \beta_2, \cdots \beta_n]$ and $\beta_i$ is the fixed target SINR of UE $i$ ($\beta_i=0$ if $q_i=0$). Since equation~\eqref{coexist} is also a linear system, the convergence results in Theorems~\ref{theoremx_het} and~\ref{finiteConvf_net} will apply if the spectral radius of $[{\bf{Q}}{\bf{B}}{\bf{F}}_1-{\overline{\bf{Q}}}{\bf{M}}]$ is less than one. Thus, the power allocations will again converge for both sets of nodes.

\section{Simulation Results}
In this section, we use Matlab-based simulations of a heterogeneous cellular network to evaluate the performance of our scheme. We assume $n_o=-190$ dBW/Hz as the noise power spectral density and $P_{\max,i}=1.0$ watts for all UEs. We assume that the cross-link gains ($g^{(f)}_{i, r_i}, g^{(q)}_{i, b}$, etc.) are of the form $\kappa^{(f)}_{i, a}\cdot d_{i, a}^{-\alpha}$, where $d_{i,a}$ is distance between UE $i$ and PoA $a\in \mathcal{R}$, $\alpha$ is the path loss exponent and $\kappa^{(f)}_{i, a}$ is an exponentially distributed random variable with unit variance on channel $f$ due to Rayleigh fading. The fading gains are assumed to be independent and identically distributed for all channels. {The picocell base stations and relays are placed randomly within a $3$ km $ \times$ $3.2$ km area centered around an MBS in $(N_r+N_p+1)$ equal non-overlapping rectangular regions. The $n$ UEs are then placed within circular regions of radii $R_L$ m around these PoAs with uniform distribution}. Unless otherwise stated, we assume the following in each simulation trial: $R_L=200$ m, $N_r=3$, $\eta_r=100$ Mbps, $N_p=4$, $\eta_p=200$ Mbps, $\eta_b=1$ Gbps, path loss exponent $\alpha=3.7$, channel bandwidths are set such that $W^{(1)}_i, W^{(2)}_i\in \{1,5\}$ MHz $\forall i \in {\mathcal{N}}$, $\tau=5$ Mbps and $Z=0.9$ for BDT. The simulation trials involve $50$ power control iterations. 

{We compare the performance of BDT with that achieved when each UE employs either a greedy algorithm to implement equation~\eqref{opt_func_het} or waterfilling allocation (WF) on its two access links \cite{goldsmith05}. {To reiterate, under the greedy approach each UE makes an optimal power allocation based on the instantaneous channel and backhaul states}. In the access links ${\bf{L}}_1$, all UEs connect to the closest PoA (an RS or a PBS other than the MBS) whereas the UEs connect to the MBS on their access links in ${\bf{L}}_2$.}  

\subsection{Convergence Example}
We first consider the network topology as shown in Fig.~\ref{fig:furt1} where the grid coordinates (in km) of the nodes are as follows: MBS $(0,0)$, PBS $(2,0)$, RS $(-2,0)$, UE A $(-2,-2)$ and UE B $(2,-2)$. Under this topology we get 
${\bf{F}}_{12}=\begin{bmatrix}
       0 & 1        \\[0.3em]
       0.0509 & 0        \\[0.3em]
     \end{bmatrix}$, ${\bf{F}}_{21}=\begin{bmatrix}
       0 & 0.5        \\[0.3em]
       0.0509 & 0        \\[0.3em]
     \end{bmatrix}$ and {${\bf{F}}_{11}{=}{\bf{F}}_{22}=\begin{bmatrix}
       0 & 0        \\[0.3em]
       0 & 0        \\[0.3em]
     \end{bmatrix}$}, where the fading gain on the link between UE B and MBS is $0.5$ and the fading gains on all other links are set to $1$. Similarly, we have ${\bf{D}}_1=[0.0164 \hspace{0.05in} 0.059]^{T}$, ${\bf{D}}_1=[0.0295 \hspace{0.05in} 0.0082]^{T}$, $W^{(1)}_i=10$ MHz and $W^{(2)}_i=5$ MHz. As shown in Fig.~\ref{fig:convg}a, the transmit power levels of both BDT and greedy algorithm converge to a fixed point (derived in equation~\eqref{powerseries} for BDT) in the case of the high backhaul regime (case 1). In the case of the limited capacity backhaul regime (case 2), we can still achieve convergence under BDT as the spectral radius constraint of Theorem~\ref{finiteConvf_net} holds. In the case of the greedy algorithm, however, the transmit powers suffer from an oscillatory trend between either overloading the backhaul links or under-utilizing them. Fig.~\ref{fig:convg}b plots the percentage of time the system takes to reach an equilibrium under BDT within 100 iterations. This figure helps verify Theorem~\ref{finiteConvf_net} using those trials where $\rho\left({\bf{M}}\right)<1$ holds true. We observe that the system always converges under BDT for low values of $\tau$ if $Z$ is closer to one.
\begin{figure*}[!t]
\begin{center}
\subfigure[]{\includegraphics[width=0.48\textwidth]{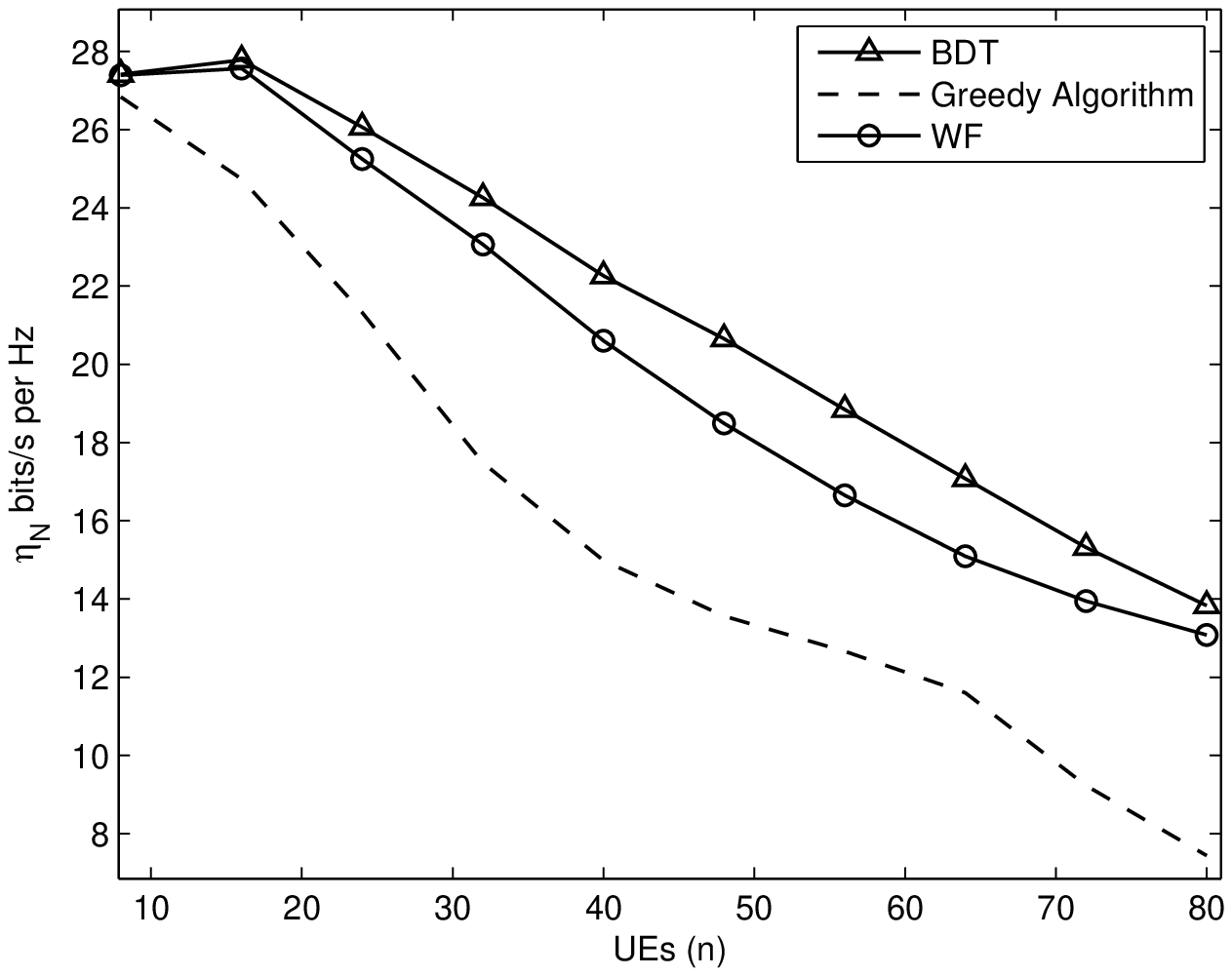}}
\subfigure[]{\includegraphics[width=0.48\textwidth]{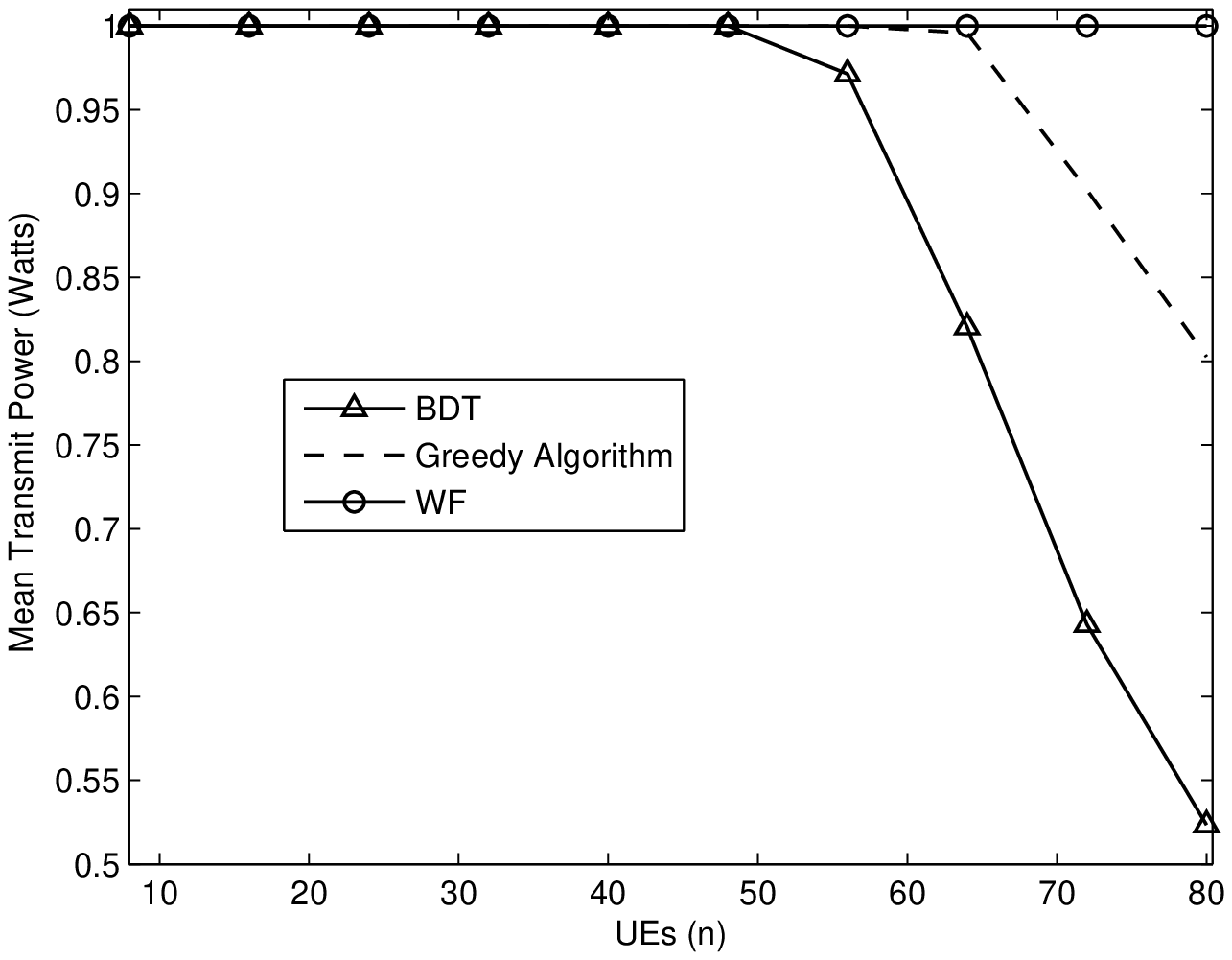}}
\caption{{Impact of network size: BDT offers the best performance over the entire range of $n$: (a) When $n$ is low, BDT yields significant network capacity improvement whereas in (b) when $n$ is large, it affords significant transmit power saving. }}
\label{fig:n}
\end{center}
\end{figure*}    
\begin{figure*}[!t]
\begin{center}
\subfigure[]{\includegraphics[width=0.48\textwidth]{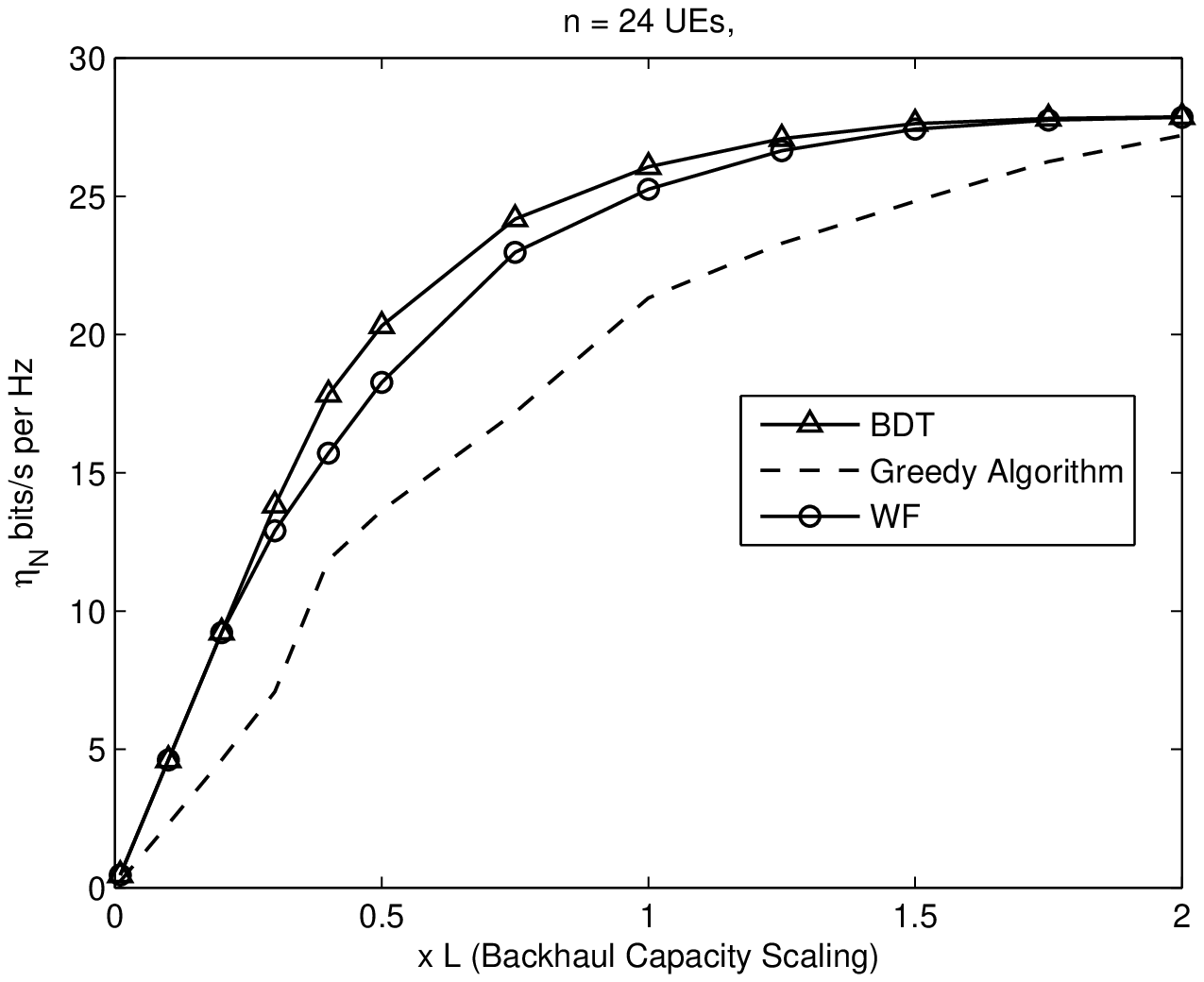}}
\subfigure[]{\includegraphics[width=0.48\textwidth]{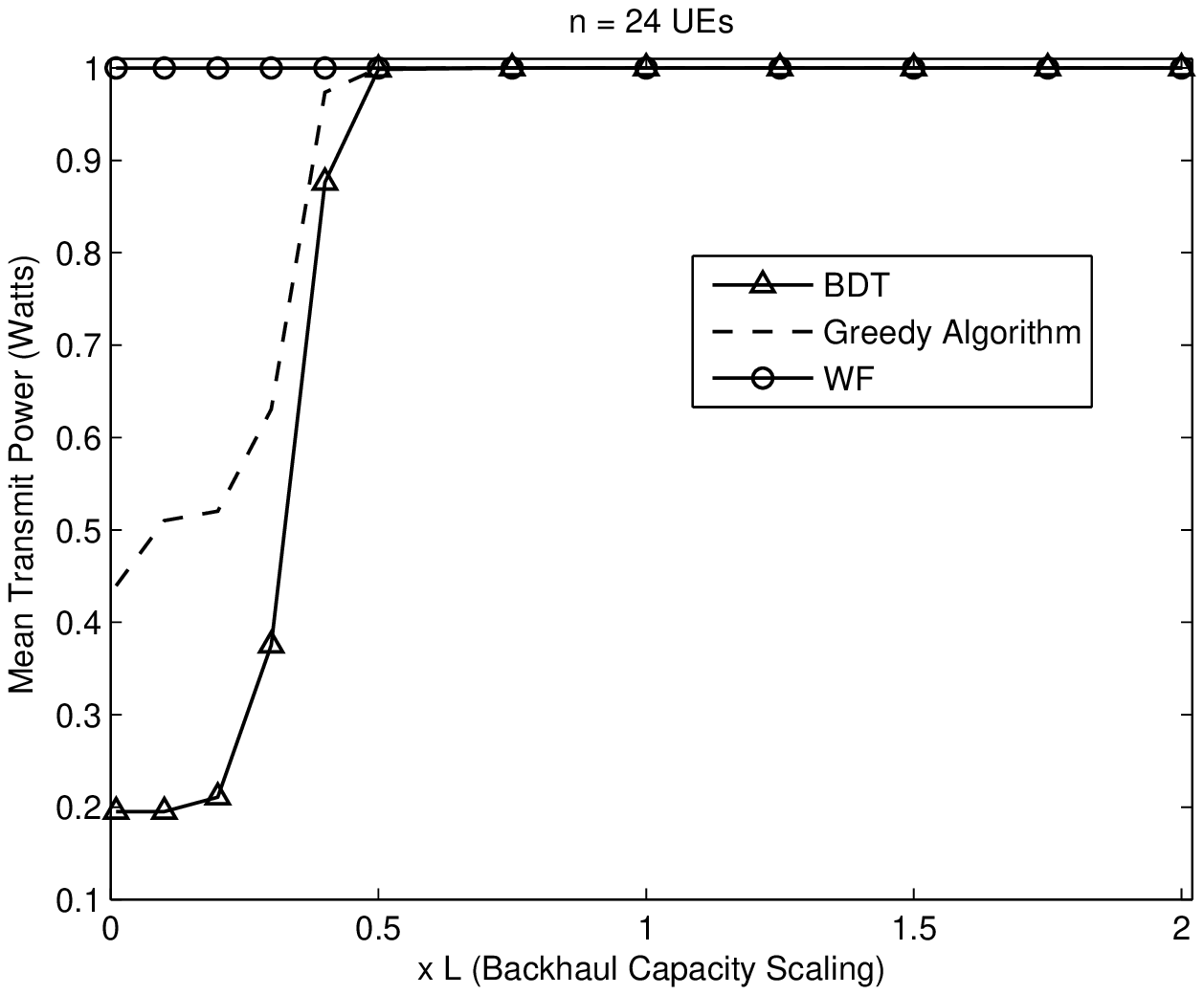}}
\caption{{Effect of varying backhaul capacity: (a) When the backhaul capacities are large, (a) nodes maximize data rate in a stable manner whereas in (b) with limited backhaul capacity, nodes become power-efficient.}}
\label{fig:backhaul}
\end{center}
\end{figure*}
\begin{figure*}[!t]
\begin{center}
\subfigure[]{\includegraphics[width=0.48\textwidth]{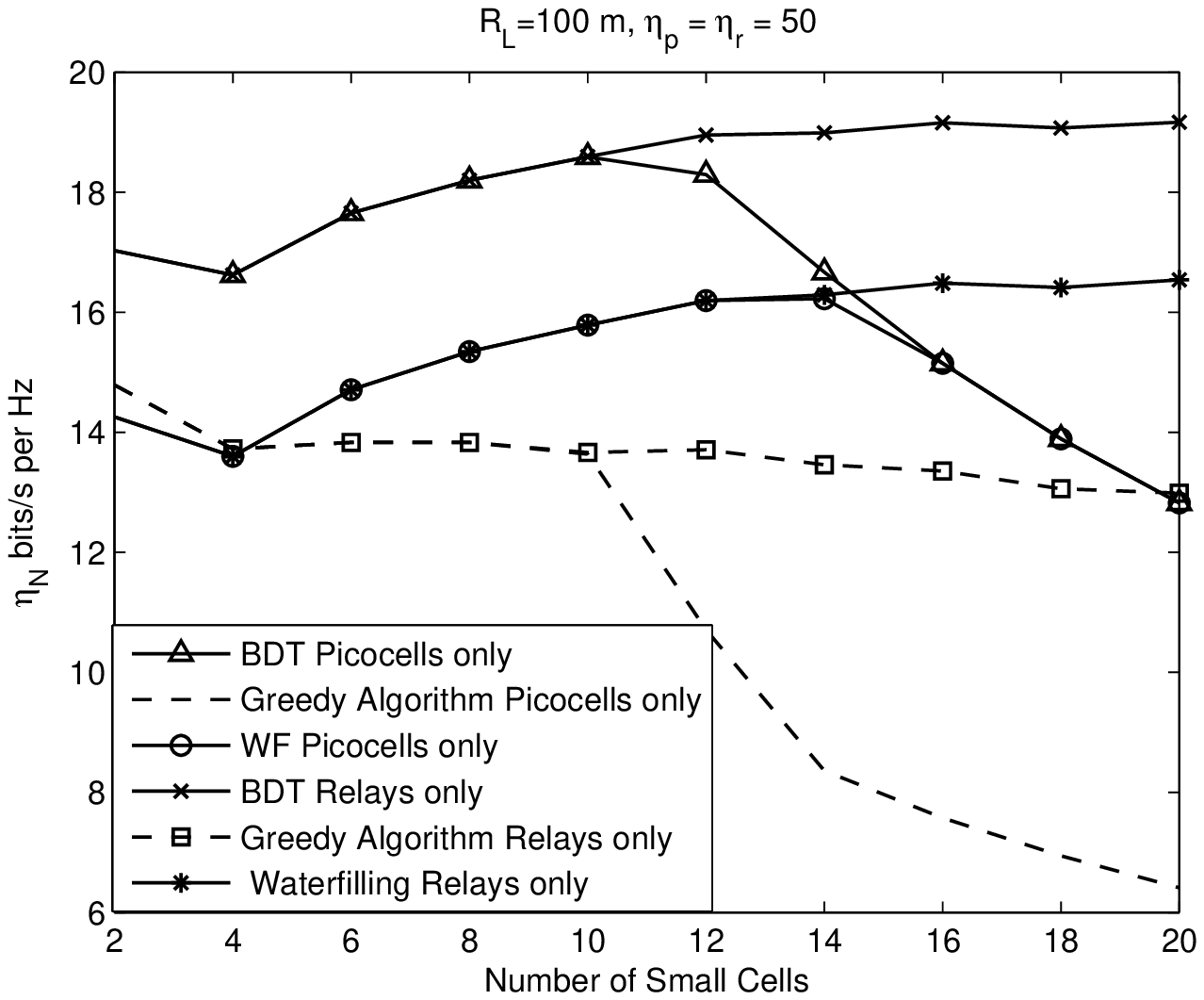}}
\subfigure[]{\includegraphics[width=0.48\textwidth]{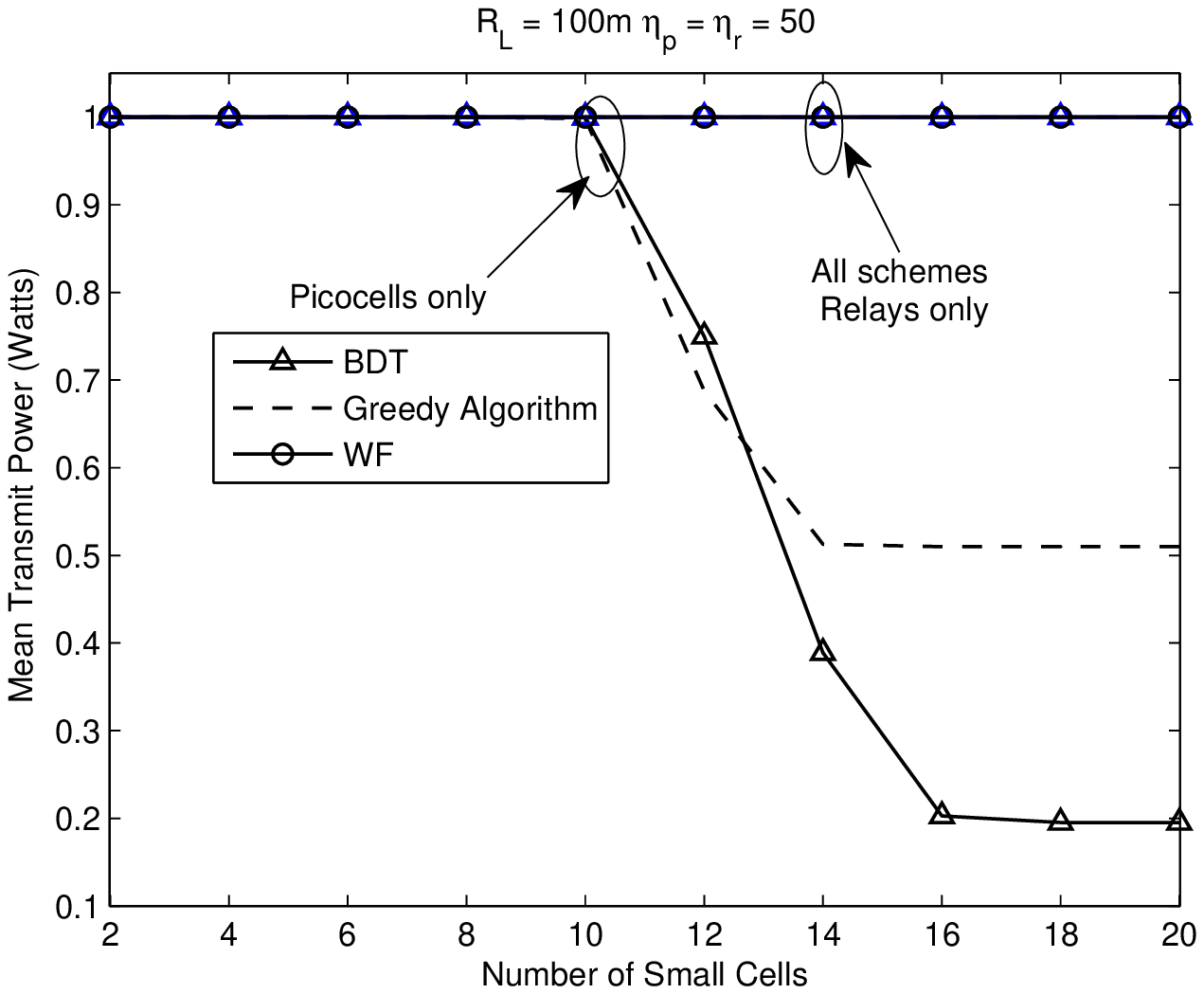}}
\caption{{In above, there are 3 UEs per each small cell. BDT offers increased improvement in terms of (a) $\eta_N$ when the  the number of small cells is fewer whereas, conversely, in (b) it offers better power use especially with increasing number of cells.}}
\label{fig:smallcells}
\end{center}
\end{figure*}    
\subsection{Aggregate end-to-end data rate and power consumption}
Next, we plot network's aggregate end-to-end data rate $\eta_N$ normalized with the total bandwidth used by the system. In Fig.~\ref{fig:n}a, we observe that BDT yields significant data rate improvement over waterfilling and greedy algorithm when $n$ (i.e. number of UEs) is small. As shown in Fig.~\ref{fig:n}b, BDT enables UEs to allocate less transmit power for the same data rate performance when there is increased load on the backhaul links caused by a large value of $n$. Next in Fig.~\ref{fig:backhaul}, we consider the impact of the backhaul capacity on system performance and average transmit power. We adjust the individual backhaul capacities as $\eta_r= 100 \cdot L $, $\eta_p= 200 \cdot L $ Mbps and $\eta_b=1000 \cdot L$ Mbps where $L\in[0-2]$ is a scaling factor for backhaul capacities. We observe from Fig.~\ref{fig:backhaul} that the best data rate performance is achieved under BDT when the backhaul capacities are large (high $L$). When the backhaul capacities are low ($L<0.3$), the UEs use 40 $\%$ transmit power for the same achieved data rates under BDT in comparison with waterfilling where all available transmit power is consumed. In contrast, the greedy algorithm suffers from performance instability due to rapid or oscillatory adaptations by the transmitters and,thus, proves to be a naive strategy. To reiterate, the mechanism under BDT introduces hysteresis into the adaptations and enable nodes to improve their data rate with significantly lower transmit power consumption.

Finally in {Fig.~\ref{fig:smallcells}}, we plot results by varying the number of small cells within the region separated by at least $2R_L$ m, with 3 UEs per cell and $\eta_r=\eta_p=50$ Mbps. We can observe in Fig.~\ref{fig:smallcells}a that the data rate performance eventually declines with increasing number of picocells (and correspondingly the UEs). In contrast, no such decline is observed when increasing number of relays as UEs adaptively send a higher data rate indirectly via the MBS which possesses a high capacity backhaul. Moreover, as shown in Fig.~\ref{fig:smallcells}b the UEs also generally utilize their transmit power more efficiently under BDT.

\section{Conclusion}
We have proposed a transmit power allocation scheme for heterogeneous cellular networks with non-ideal backhaul links where UEs have dual connectivity. We have demonstrated via simulation results that the UEs can significantly improve their energy-efficiency and data rates by taking into account the backhaul load. We have also illustrated that our scheme can achieve convergence in dynamic and complex wireless systems. Future work could consider the impact of cooperation and interference cancellation at the receivers. Moreover, we would also consider generalization of our proposed scheme to a network where each node is connected to more than two PoAs.

\bibliography{PaperBibliography}

\end{document}